\newcommand{\etal}{\emph{et al.}\xspace}
\newcommand{\ellminus}{\ensuremath{\ell^-}}
\newcommand{\ellplus}{\ensuremath{\ell^+}}
\newcommand{\Lone}[2]{\ensuremath{d_{L_1}(#1, #2)}}
\newcommand{\xdist}[2]{\ensuremath{|#1 #2|_x}}
\newcommand{\ydist}[2]{\ensuremath{|#1 #2|_y}}
\newcommand{\sq}[2]{\ensuremath{S(#2)}}
\newcommand{\mpath}[2]{\ensuremath{\sigma_{#1\to#2}}}
\newcommand{\n}[2]{\ensuremath{#2_{#1}}}
\newcommand{\cLone}{\ensuremath{c^*}}
\newtheorem{problem}{Problem}
\newtheorem{lemma}[problem]{Lemma}
\newtheorem{theorem}[problem]{Theorem}
\newtheorem{observation}[problem]{Observation}
\title{On the stretch factor of the Theta-4 graph}
\date{}
\author{Luis Barba\thanks{Carleton University, Ottawa, Canada} \thanks{Universit\'e Libre de Bruxelles, Brussels, Belgium} \and Prosenjit Bose$^*$ \and Jean-Lou De Carufel$^*$ \and Andr\'e van Renssen$^*$ \and Sander Verdonschot$^*$}
\begin{document}

\maketitle

\begin{abstract}
In this paper we show that the $\theta$-graph with 4 cones has constant stretch factor, i.e., 
there is a path between any pair of vertices in this graph whose length is at most a constant times the Euclidean distance between that pair of vertices. 
This is the last $\theta$-graph for which it was not known whether its stretch factor was bounded.
\end{abstract}

\section{Introduction}

A $t$-spanner of a weighted graph $G$ is a connected sub-graph $H$ with the property that for all pairs of vertices $u$ and $v$, the weight of the shortest path between $u$ and $v$ in $H$ is at most $t$ times the weight of the shortest path between $u$ and $v$ in $G$, for some fixed constant $t\geq 1$. The smallest constant $t$ for which $H$ is a $t$-spanner of $G$ is referred to as the \emph{stretch factor} or \emph{spanning ratio} of the graph. The graph $G$ is referred to as the \emph{underlying graph}. In our setting, the underlying graph is the complete graph on a set of $n$ points in the plane and the weight of an edge is the Euclidean distance between its endpoints. A spanner of such a graph is called a \emph{geometric spanner}. For a comprehensive overview of geometric spanners, see the book by Narasimhan and Smid \cite{NS06}.

\begin{wrapfigure}[16]{r}{0.39\textwidth}
  \begin{center}
  \vspace{-.4in}
\includegraphics{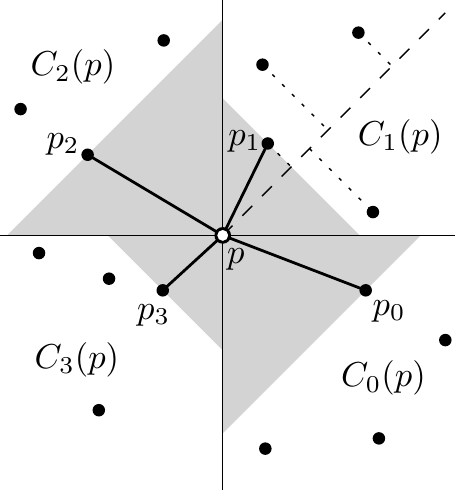}
\vspace{-.2in}
\caption{\small The neighbors of $p$ in the $\theta_4$-graph of $P$. Each edge supports an empty isosceles triangle.}
  \label{fig:Construction of Theta 4}
    \end{center}
\end{wrapfigure}

In this paper, we focus on $\theta$-graphs. Introduced independently by Clarkson~\cite{Cl87} and Keil~\cite{Keil88}, the $\theta_m$-graph is constructed as follows. Given a set $P$ of points in the plane, we consider each point $p \in P$ and partition the plane into $m$ cones (regions in the plane between two rays originating from the same point) with apex $p$, each defined by two rays at consecutive multiples of $\theta = 2\pi/m$ radians from the negative $y$-axis. We label the cones $C_0(p)$ through $C_{m-1}(p)$, in counter-clockwise order around $p$, starting from the negative y-axis; see~Fig.~\ref{fig:Construction of Theta 4}. In each cone $C_i(p)$, we add an edge between $p$ and $p_i$, the point in $C_i(p)$ nearest to $p$. However, instead of using the Euclidean distance, we measure distance in $C_i(p)$ by projecting each vertex onto the angle bisector of this cone.
Formally, $\n{i}{p}$ is the point in $C_i(p)$ such that for every other point $w\in C_i(p)$, 
the projection of $\n{i}{p}$ onto the angle bisector of $C_i(p)$ lies closer to $p$ than that of $w$.
For simplicity, we assume that no two points of $P$ lie on a line parallel to either the boundary or the angle bisector of a cone.

Ruppert and Seidel~\cite{RS91} showed that $\theta_m$-graphs are geometric spanners for $m \geq 7$, and their stretch factor approaches 1 as $m$ goes to infinity. Their proof crucially relies on the fact that, given two vertices $p$ and $q$ such that $q \in C_i(p)$, the distance between $p_i$ and $q$ is always less than the distance between $p$ and $q$. This property does not hold for $m \leq 6$ and indeed, the path obtained by starting at $p$ and repeatedly following the edge in the cone that contains $q$, is not necessarily a spanning path.
The main motivation for using spanners is usually to reduce the number of edges in the graph without increasing the length of shortest paths too much. Thus, $\theta$-graphs with fewer cones are more interesting in practice, as they have fewer edges. This raises the following question: ``What is the smallest $m$ for which the $\theta_m$-graph is a geometric spanner?'' Bonichon~\etal~\cite{BGHI10} showed that the $\theta_6$-graph is also a geometric spanner. Recently, Bose~\etal~\cite{bose2013theta5} proved the same for the $\theta_5$-graph. Coming from the other side, El~Molla~\cite{molla2009yao} showed that there is no constant $t$ for which the $\theta_2$- and $\theta_3$-graphs are geometric spanners. This leaves the $\theta_4$-graph as the only open question. Moreover, its resemblance to graphs like the $\textrm{Yao}_4$-graph~\cite{bose2012pi} and the $L_\infty$-Delaunay triangulation~\cite{StretchFactorL1Linfty}, both of which are spanners, make this question more tantalizing.
In this paper we establish an upper bound of approximately $237$ on the stretch factor of the $\theta_4$-graph, thereby showing that it is a geometric spanner.
In Section~\ref{sec:LowerBound}, we present a lower bound of $7$ that we believe is closer to the true stretch factor of the $\theta_4$-graph. 

\section{Existence of a spanning path}
Let $P$ be a set of points in the plane.
In this section, we prove that the $\theta_4$-graph of $P$ is a spanner. We do this by showing that the $\theta_4$-graph approximates the $L_\infty$-Delaunay triangulation.
The $L_\infty$-Delaunay triangulation of $P$ is a geometric graph with vertex set $P$, and an edge between two points of $P$ whenever there exists an empty axis-aligned square having these two points on its boundary.

Bonichon \emph{et al.}~\cite{StretchFactorL1Linfty} showed that the $L_\infty$-Delaunay triangulation has a stretch factor of $\cLone = \sqrt{4 + 2\sqrt{2}}$, i.e., there is a path between any two vertices whose length is at most $\cLone$ times their Euclidean distance.
We approximate this path in the $L_\infty$-Delaunay triangulation by showing the existence of a spanning path in the $\theta_4$-graph of $P$ joining the endpoints of every edge in the $L_\infty$-Delaunay triangulation. 
The main ingredient to obtain this approximation  is Lemma~\ref{lemma:One empty quadrant} whose proof is deferred to Section~\ref{sec:One empty quadrant}. Before we can state this lemma, we need a few more definitions.
Given two points $s$ and $t$,
their $L_1$ distance $\Lone{s}{t}$ is the sum of the absolute differences of their $x$- and $y$-coordinates.

Let $S_t(s)$ be the smallest axis-aligned square centered on $t$ that contains $s$.
Let $\ellminus_t$ and $\ellplus_t$ be the lines with slope $-1$ and $+1$ passing through $t$, respectively.

Throughout this paper, we repeatedly use $t$ to denote a \emph{target} point of $P$ that we want to reach via a path in the $\theta_4$-graph. Therefore, we typically omit the reference to $t$ and write $\ellminus, \ellplus$ and $\sq{t}{s}$ when referring to $\ellminus_t, \ellplus_t$ and $S_t(s)$, respectively.

We say that an object is \emph{empty} if its interior contains no point of $P$. An \emph{$s$-$t$-path} is a path with endpoints $s$ and $t$.

\begin{lemma}\label{lemma:One empty quadrant}
Let $s$ and $t$ be two points of $P$ such that $t$ lies in $C_0(s)$.
If the top-right quadrant of $\sq{t}{s}$ is empty and $C_1(s)$ contains no point of $P$ below
$\ell^-$, then there is an $s$-$t$-path in the $\theta_4$-graph of $P$ of length at most $18\cdot \Lone{s}{t}$.
\end{lemma}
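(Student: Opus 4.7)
My plan is to prove the lemma by strong induction on $|P \cap \sq{t}{s}|$, building the path one $\theta_4$-edge at a time starting from $s$. Let $s_0$ denote the $\theta_4$-neighbor of $s$ in $C_0(s)$, i.e., the point of $P \cap C_0(s)$ minimizing $x - y$. If $s_0 = t$, then the edge $(s,t)$ is already in the graph and has Euclidean length at most $\Lone{s}{t}$, so the base case is immediate. Otherwise, the optimality of $s_0$ gives $s_{0,x} - s_{0,y} \le t_x - t_y$, which directly rules out $t \in C_2(s_0)$. I would also rule out $t \in C_3(s_0)$: combining the empty top-right quadrant of $\sq{t}{s}$ with $s_{0,y} \le s_y$ forces $s_{0,x} > t_x + h$ (where $h$ is the half-side of $\sq{t}{s}$), and then the bisector inequality forces $s_{0,y} > s_y$, contradicting $s_0 \in C_0(s)$. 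Hence the recursion branches into the progress case $t \in C_0(s_0)$ and the overshoot case $t \in C_1(s_0)$.

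\paragraph{Progress case.} Here $\sq{t}{s_0} \subseteq \sq{t}{s}$, so the empty top-right quadrant hypothesis is inherited. I would then verify that the $C_1$-below-$\ell^-$ condition also transfers to $s_0$. The only new region to check, $C_1(s_0) \setminus C_1(s) = \{x \ge s_{0,x},\, s_{0,y} \le y < s_y\}$, lies inside $C_0(s)$: any point of $P$ in it with $x - y < s_{0,x} - s_{0,y}$ would contradict the choice of $s_0$, and the remaining subcase requires a short geometric argument combining the bisector inequality with the slope $-1$ line. With both hypotheses transferred, the inductive hypothesis yields an $s_0$-to-$t$ path of length at most $18\Lone{s_0}{t}$, and I would conclude by bounding $\Ltwo{s}{s_0}$ from the constraints on $s_0$ and showing that the progress $\Lone{s}{t} - \Lone{s_0}{t}$ absorbs this extra edge well within the factor $18$.

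\paragraph{Overshoot case and main obstacle.} In the overshoot case $s_0$ lies strictly below $t$, so the path must ascend; I would apply a rotated analogue of the lemma to the pair $(s_0, t)$ with $t \in C_1(s_0)$. The hypothesis that $C_1(s)$ contains no point below $\ell^-$ is precisely what is needed to establish the rotated emptiness condition for the new pair, because $C_1(s_0) \cap \{y \ge s_y\} \subseteq C_1(s)$ and the relevant overshoot region lies below $\ell^-$. The principal difficulty is controlling this detour: the edge $(s,s_0)$ can be much longer than $\Lone{s}{t}$, so a naive additive accounting fails, and the constant $18$ almost certainly originates from amortizing the detour length against the rotated recursion. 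A secondary obstacle is the subtle transfer of the $C_1$-below-$\ell^-$ condition in the progress case, which may require either strengthening the inductive invariant or a short auxiliary geometric lemma about candidate points of $C_0(s)$ that sit simultaneously below $\ell^-$ and above $s_0$.
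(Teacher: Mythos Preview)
Your proposal has a genuine gap in the progress case: the $t$-protected hypothesis does \emph{not} transfer from $s$ to $s_0$ after a single $0$-edge. Concretely, take $s=(0,10)$, $t=(5,0)$, $s_0=(1,2)$, and $q=(2,2.5)$. Then $t\in C_0(s)$, the top-right quadrant of $\sq{t}{s}$ is empty, $s$ is (vacuously) $t$-protected, and $s_0$ is the $0$-neighbour of $s$ since $s_{0,x}-s_{0,y}=-1$ is minimal. We are in your progress case ($t\in C_0(s_0)$), yet $q\in C_1(s_0)$ lies below $\ell^-$ (as $q_x+q_y=4.5<5$), so $s_0$ is \emph{not} $t$-protected and you cannot recurse on $(s_0,t)$. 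Your ``short geometric argument combining the bisector inequality with the slope $-1$ line'' cannot exist here: $q$ satisfies $q_x-q_y\ge s_{0,x}-s_{0,y}$ and still violates the invariant. This is precisely why the paper does not recurse edge-by-edge: when $s_0$ lands below $\ell^-$, one must first follow a chain of $1$-edges (the $\max_1$-path) up to the first $t$-protected point $w$, and only then recurse on $w$.

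Your overshoot case is also unproved. You correctly observe that a single rotated recursion blows the constant (one gets roughly $|ss_0|+18\,\Lone{s_0}{t}$ with $\Lone{s_0}{t}$ possibly close to $2\,\Lone{s}{t}$), and you defer the fix to an unspecified amortization. The paper's actual mechanism is quite different: it packages sequences of $0$- and $1$-edges into \emph{light paths} $\sigma_{s\to t}$ (Algorithm~1), proves the global bound $|\sigma_{s\to t}|\le 3\,\Lone{s}{t}$ via disjoint projections (Lemmas~6--8), and inducts on the number of $t$-protected points in the rectangle $R$. In the hard subcase (your overshoot), it runs one light path from $s$ down to a point $z$ in the bottom-right quadrant and a reflected light path from $z$ back up to a point $w\in R$, then shows geometrically that $\Lone{s}{t}\le 2\,\Lone{s}{w}$; this is what converts the combined cost $9\,\Lone{s}{t}$ into $18\,\Lone{s}{w}$ and closes the induction. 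None of this structure is present in your outline, and the one-edge-at-a-time induction on $|P\cap\sq{t}{s}|$ cannot be repaired without essentially reinventing the light-path machinery.
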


Given a path $\varphi$, let $|\varphi|$ denote the sum of the lengths of the edges in $\varphi$.
Using Lemma~\ref{lemma:One empty quadrant}, we obtain the following.

\begin{lemma}\label{lemma:Approximating L_infty edges}
Let $s$ and $t$ be two points of $P$. If the smallest axis-aligned square enclosing $s$ and $t$, that has $t$ as a corner, is empty, then there is an $s$-$t$-path in the $\theta_4$-graph of $P$ of length at most $(\sqrt{2} +  36) \cdot |st|$.
\end{lemma}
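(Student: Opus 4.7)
The plan is first to use the eight-fold symmetry of the $\theta_4$-graph (rotations by multiples of $\pi/2$ and axis-aligned reflections) to place $s$ at the origin and $t = (d,\delta)$ with $d \ge \delta \ge 0$, so that the empty axis-aligned square of the hypothesis is $Q = [0,d]\times[\delta - d,\delta]$ with $t$ at its upper-right corner and $s$ on its left edge. Useful inequalities are $|st| = \sqrt{d^2+\delta^2} \in [d,\,d\sqrt 2]$ and $\Lone{s}{t} = d+\delta \le \sqrt 2\cdot|st|$.

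\textbf{Pivot and first edge.} The natural pivot is $p = \n{1}{s}$, the $\theta_4$-neighbor of $s$ in the cone $C_1(s)$ that contains $t$. If $p = t$, then the edge $st$ alone is a valid path of length $|st|$ and we are done. Otherwise, since $Q$ is empty and $p \in P \cap C_1(s)$, the vertex $p$ lies outside $Q$, forcing $p_x \ge d$ or $p_y \ge \delta$. Moreover, the minimality of the projected distance of $p$ along the bisector $(1,1)/\sqrt 2$ of $C_1(s)$ yields $p_x + p_y \le t_x + t_y = d+\delta$. Combining, $|sp| \le p_x + p_y \le d+\delta \le \sqrt 2\cdot|st|$, so the edge $sp$ contributes at most $\sqrt 2\cdot|st|$ to the total length.

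\textbf{Completing the path via Lemma~\ref{lemma:One empty quadrant}.} Depending on whether $p_x \ge d$ (Case A: $p \in C_0(t)$) or $p_y \ge \delta$ (Case B: $p \in C_2(t)$), after a suitable rotation or reflection the pair $(p, t)$ can be put in the standard configuration $t \in C_0(p)$ of Lemma~\ref{lemma:One empty quadrant}. The two hypotheses of that lemma---(i) the relevant adjacent quadrant of $\sq{t}{p}$ is empty and (ii) the relevant cone of $p$ contains no point of $P$ below $\ellminus_t$ (or $\ellplus_t$ in the reflected case)---should follow from the emptiness of $Q$ together with the minimality of $p$'s projected distance in $C_1(s)$, since any violating witness would either sit inside $Q$ or would contradict the choice of $p$ as the closest point in $C_1(s)$ along its bisector. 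Lemma~\ref{lemma:One empty quadrant} then produces a $p$-$t$-subpath of length at most $18\cdot \Lone{p}{t}$. A short calculation using $p_x + p_y \le d+\delta$ together with $p_x \ge d$ or $p_y \ge \delta$ gives $\Lone{p}{t} \le 2d \le 2\cdot|st|$, so this subpath has length at most $36\cdot|st|$; adding the edge $sp$ yields the claimed bound $(\sqrt 2 + 36)\cdot|st|$.

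\textbf{Main obstacle.} The delicate step is the verification of the two hypotheses of Lemma~\ref{lemma:One empty quadrant} for the pair $(p,t)$. The choice $p = \n{1}{s}$ controls only the projected distance onto one specific bisector, so the absence of points in the cone $C_1(p)$ below $\ellminus_t$ is not automatic, and ruling out such points will likely require a careful geometric case analysis exploiting both the shape of $Q$ and the minimality of $p$, perhaps by showing that any would-be witness point could itself be used as a closer nearest-neighbor candidate of $s$ in $C_1(s)$.
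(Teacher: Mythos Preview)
Your approach is exactly the paper's, modulo a $180^\circ$ rotation of the picture (the paper normalizes so that $s\in C_1(t)$ and takes the pivot $\n{3}{s}$; you normalize so that $t\in C_1(s)$ and take $\n{1}{s}$---these are the same move). The length bookkeeping you give also matches the paper's.

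The ``main obstacle'' you flag is not an obstacle, and in fact you have already written down the inequality that dissolves it. From $p=\n{1}{s}$ and $t\in C_1(s)$ you derived $p_x+p_y\le d+\delta=t_x+t_y$, i.e.\ $p$ lies on or below $\ellminus_t$. After the $180^\circ$ rotation that puts your picture into the orientation of Lemma~\ref{lemma:One empty quadrant}, this becomes ``$p$ lies on or above $\ellminus_t$''. But any point above $\ellminus_t$ is automatically $t$-protected: if $q\in C_1(p)$ then $q_x\ge p_x$ and $q_y\ge p_y$, hence $q_x+q_y\ge p_x+p_y$, so $q$ also lies above $\ellminus_t$. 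Thus hypothesis~(ii) of Lemma~\ref{lemma:One empty quadrant} holds with no further case analysis. For hypothesis~(i), the same inequality $p_x+p_y\le d+\delta$ together with $p_x,p_y\ge 0$ and $\delta\le d$ forces $p\in S_t(s)$, so $S_t(p)\subseteq S_t(s)$ and the relevant quadrant of $S_t(p)$ sits inside the empty square $Q$. The paper states both facts in one line (``$\n{3}{s}$ lies above $\ellminus$ and hence $C_1(\n{3}{s})$ contains no point of $P$ below $\ellminus$''; ``$\sq{t}{\n{3}{s}}\subset \sq{t}{s}$''). So your proof is complete once you observe this; no ``careful geometric case analysis'' is required.
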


\noindent\emph{Proof.}
Assume without loss of generality that $s$ lies in $C_1(t)$. Then, the top-right quadrant of $\sq{t}{s}$ is empty as it coincides with the smallest axis-aligned square enclosing $s$ and $t$ that has $t$ as a corner; see Fig.~\ref{fig:Square Properties}(a).
Recall that $\n{3}{s}$ is the neighbor of $s$ in the $\theta_4$-graph inside the cone $C_3(s)$.
Assume that $\n{3}{s}\neq t$ as otherwise the result follows trivially.
Consequently, $\n{3}{s}$ must lie either in $C_0(t)$ or in $C_2(t)$. Assume without loss of generality that $\n{3}{s}$ lies in the top-left quadrant of $\sq{t}{s}$.
As $\n{3}{s}$ lies in the interior of $\sq{t}{s}$, $\sq{t}{\n{3}{s}}\subset \sq{t}{s}$ and hence, the top-right quadrant of $\sq{t}{\n{3}{s}}$ is empty. 
Moreover, $\n{3}{s}$ lies above $\ellminus$ and hence $C_1(\n{3}{s})$ contains no point of $P$ below $\ellminus$.
Therefore, by Lemma~\ref{lemma:One empty quadrant} there is an $\n{3}{s}$-$t$-path $\varphi$ of length at most $18 \cdot \Lone{\n{3}{s}}{t}$. Since $\n{3}{s}$ lies inside $\sq{t}{s}$, $|\n{3}{s}t| \leq \sqrt{2}\cdot|st|$ and hence
$|\varphi|\leq 18 \cdot \Lone{\n{3}{s}}{t}\leq  18\sqrt{2} \cdot |\n{3}{s}t| \leq 18 \sqrt{2} \sqrt{2} \cdot |st| =36\cdot |st|$.
Moreover, the length of edge $s\n{3}{s}$ is at most $\Lone{s}{t}\leq \sqrt{2}\cdot |st|$ since $\n{3}{s}$ must lie above $\ellminus$.
Thus, $s\n{3}{s} \cup \varphi$ is an $s$-$t$-path of length  
$|s\n{3}{s}| + |\varphi| \leq (\sqrt{2}+ 36) \cdot |st|$.\qed\vspace{.2in}

The following observation is depicted in Fig.~\ref{fig:Square Properties}(b).
\begin{observation}\label{obs:Squares}
Let $S$ be an axis-aligned square.
If two points $a$ and $b$ lie on consecutive sides along the boundary of $S$, then there is a square $S_{ab}$ containing the segment $ab$ such that $S_{ab}\subseteq S$ and either $a$ or $b$ lies on a corner of $S_{ab}$.
\end{observation}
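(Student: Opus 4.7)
The plan is to identify the corner $c$ of $S$ shared by the two consecutive sides containing $a$ and $b$, compare the distances from $c$ to the two points along those sides, and then build $S_{ab}$ as an axis-aligned square of side length equal to the larger of the two distances, with whichever of $a,b$ is farther from $c$ placed at a corner of $S_{ab}$.

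Concretely, I would assume without loss of generality that $c$ is the bottom-right corner of $S$, that $a$ lies on the bottom side at distance $\alpha$ to the left of $c$, and that $b$ lies on the right side at distance $\beta$ above $c$. Further assume $\alpha\geq\beta$; otherwise swap the roles of $a$ and $b$ throughout. I then define $S_{ab}$ to be the axis-aligned square whose bottom side is the segment from $a$ to $c$. This square has side length $\alpha$ and has $a$ as its bottom-left corner by construction.

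I would then verify the three required properties in turn. The inclusion $S_{ab}\subseteq S$ is immediate because both squares are axis-aligned, share the corner $c$, and $\alpha$ is at most the side length of $S$ (since $a$ lies on the bottom side of $S$). For $ab\subseteq S_{ab}$, the point $a$ is already a corner of $S_{ab}$, while $b$ lies on the right side of $S_{ab}$ at height $\beta\leq\alpha$ above $c$; convexity of $S_{ab}$ then yields the whole segment. Finally, $a$ is a corner of $S_{ab}$ by construction, satisfying the last clause.

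The argument is essentially a one-picture verification, so there is no real technical obstacle. The only subtle choice is picking which of $a,b$ to promote to a corner: it must be the one farther from $c$, because the square must have side length at least $\max(\alpha,\beta)$ in order to contain the segment $ab$, and anchoring the farther endpoint at the corner adjacent to $c$ is precisely what keeps that square inside $S$.
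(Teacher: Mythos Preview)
Your argument is correct and is essentially the construction the paper has in mind: the paper does not give a proof at all, treating the statement as a self-evident observation illustrated only by Fig.~\ref{fig:Square Properties}(b), which depicts exactly the square you build (anchored at the shared corner $c$, with the farther of $a,b$ promoted to a corner of $S_{ab}$). Your write-up is therefore more explicit than the paper's own treatment, but follows the same idea.
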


\begin{lemma}\label{lemma:Bounding L_infty edges}
Let $ab$ be an edge of the $L_\infty$-Delaunay triangulation of $P$. There is an $a$-$b$-path $\varphi_{ab}$ in the $\theta_4$-graph of $P$ such that $|\varphi_{ab}| \leq (1+ \sqrt{2}) \cdot (\sqrt{2}+ 36)\cdot |ab|$.
\end{lemma}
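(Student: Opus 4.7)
\emph{Plan.}
The plan is to reduce the statement to Lemma~\ref{lemma:Approximating L_infty edges} with the help of Observation~\ref{obs:Squares}, by a case analysis based on where $a$ and $b$ sit on the empty axis-aligned square witnessing the edge. Let $S$ be an empty axis-aligned square with $a,b\in\partial S$ (such a square exists because $ab$ is an edge of the $L_\infty$-Delaunay triangulation).

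\emph{Easy case: $a$ and $b$ on the same or on two consecutive sides of $S$.} This already covers the situation in which at least one of $a,b$ coincides with a corner of $S$, because a corner of $S$ lies on two consecutive sides. In this case I apply Observation~\ref{obs:Squares} (with the obvious degenerate version when the two points are on the same side) to obtain an axis-aligned square $S_{ab}\subseteq S$ that contains segment $ab$ and has either $a$ or $b$ at a corner. Because $S_{ab}\subseteq S$, it is empty, and therefore the smallest axis-aligned square that encloses $\{a,b\}$ and has the corner endpoint as a corner is a sub-square of $S_{ab}$ and is empty as well. Lemma~\ref{lemma:Approximating L_infty edges} then yields an $a$-$b$-path in the $\theta_4$-graph of length at most $(\sqrt{2}+36)\cdot |ab|$, which already meets the claimed bound.

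\emph{Hard case: $a$ and $b$ lie on two opposite sides of $S$ with neither at a corner.} Assume without loss of generality that $a$ lies in the interior of the bottom side and $b$ in the interior of the top side of $S$. Here no sub-square of $S$ simultaneously contains $a$ and $b$ and has one of them at a corner, so Lemma~\ref{lemma:Approximating L_infty edges} cannot be applied directly. My plan is to split the path at an intermediate vertex $c\in P$ chosen on $\partial S$ (by picking $c$ to be at, or to play the role of, a corner of $S$ between $a$ and $b$). I then apply the easy case above twice: once to the pair $(a,c)$, which lie on two consecutive sides of an empty sub-square of $S$, and once to the pair $(c,b)$, which again lie on two consecutive sides of an empty sub-square of $S$. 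Each of the two legs gives a path of length at most $(\sqrt{2}+36)\cdot|ac|$ and $(\sqrt{2}+36)\cdot|cb|$, respectively. Concatenating the two paths produces an $a$-$b$-path $\varphi_{ab}$ with $|\varphi_{ab}| \leq (\sqrt{2}+36)(|ac|+|cb|)$.

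\emph{Bounding the detour.} The final ingredient is a purely Euclidean estimate: since $c$ lies near a corner of $S$ and the sides of $S$ have length at most $|ab|$, a routing from $a$ to $b$ via such a corner-detour satisfies $|ac|+|cb|\leq(1+\sqrt{2})\cdot|ab|$. Combined with the bound on $|\varphi_{ab}|$ above, this gives $|\varphi_{ab}|\leq (1+\sqrt{2})(\sqrt{2}+36)\cdot|ab|$.

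\emph{Main obstacle.} The crucial step is the opposite-sides case: exhibiting the intermediate vertex $c$ and two empty sub-squares of $S$ that place $(a,c)$ and $(c,b)$ each on consecutive sides so that Observation~\ref{obs:Squares} and Lemma~\ref{lemma:Approximating L_infty edges} apply. The other cases are routine, and the detour factor $(1+\sqrt{2})$ arises solely from the geometric cost of routing around a corner of $S$ rather than crossing it along the straight segment $ab$.
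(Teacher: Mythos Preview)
Your overall plan matches the paper's, but the opposite-sides case has a genuine gap: you have not explained where the intermediate vertex $c\in P$ comes from. Starting only from the edge witness---an empty axis-aligned square $S$ with $a,b\in\partial S$---there is no reason for any further point of $P$ to lie on $\partial S$, and the corners of $S$ are in general not points of $P$. Since Lemma~\ref{lemma:Approximating L_infty edges} requires both endpoints to belong to $P$, ``picking $c$ to be at, or to play the role of, a corner of $S$'' does not give you anything to which the lemma applies.

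The paper closes this gap by exploiting the \emph{triangle} structure of the $L_\infty$-Delaunay triangulation rather than the edge definition. It takes a Delaunay triangle $T=(a,b,c)$ containing the edge $ab$; by the triangle characterization there is an empty square $S$ with all three of $a,b,c$ on $\partial S$ (on different sides, by general position). If $a,b$ are on opposite sides of $S$, then $c$ automatically sits on one of the two remaining sides, so both pairs $(a,c)$ and $(c,b)$ lie on consecutive sides of $S$, and Observation~\ref{obs:Squares} plus Lemma~\ref{lemma:Approximating L_infty edges} apply to each leg. The detour bound $|ac|+|cb|\le(1+\sqrt{2})\,|ab|$ then finishes the argument exactly as you outlined. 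So the fix is simply to take $S$ to be the empty circumscribing square of a Delaunay triangle on $ab$ and let $c$ be its third vertex.
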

\begin{proof}
Let $T  = (a,b,c)$ be a triangle in the $L_\infty$-Delaunay triangulation of $P$. By definition of this triangulation, there is an empty square $S$ such that every vertex of $T$ lies on the boundary of $S$.
By the general position assumption, $a,b$ and $c$ must lie on different sides of $S$. 
If $a$ and $b$ lie on consecutive sides of the boundary of $S$, then by Observation~\ref{obs:Squares} and Lemma~\ref{lemma:Approximating L_infty edges} there is a path $\varphi_{ab}$ contained in the $\theta_4$-graph of $P$ such that $|\varphi_{ab}| \leq (\sqrt{2}+36) \cdot |ab|$.

\begin{figure}[t]
\centering
\includegraphics{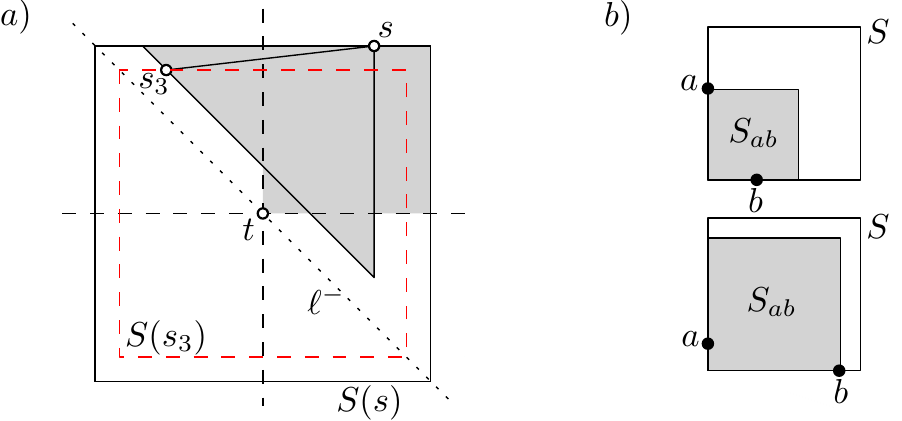}
\caption{\small $a)$ Configuration used in the proof of Lemma ~\ref{lemma:Approximating L_infty edges}, grey areas represent empty regions. $b)$ If $a$ and $b$ lie on consecutive sides of a square $S$, there is a square $S_{ab}$ such that $ab\subset S_{ab}\subseteq S$ and either $a$ or $b$ lies on a corner of $S_{ab}$.}
\label{fig:Square Properties}
\end{figure}

If $a$ and $b$ lie on opposite sides of $S$, then both $ac$ and $cb$ have their endpoints on consecutive sides along the boundary of $S$.
Let $S_{ac}$ be the square contained in $S$ existing as a consequence of Observation~\ref{obs:Squares} when applied on the edge $ac$. Thus, either $a$ or $c$ lies on a corner of $S_{ac}$.
Furthermore, as $S_{ac}$ is contained in $S$, it is also empty.
Consequently, by Lemma~\ref{lemma:Approximating L_infty edges}, there is a $a$-$c$-path $\varphi_{ac}$ such that 
$|\varphi_{ac}|\leq (\sqrt{2}+36)\cdot |ac|$. Analogously, there is a path $\varphi_{cb}$ such that $|\varphi_{cb}|\leq (\sqrt{2}+36)\cdot |cb|$.
Using elementary geometry, it can be shown that since $a$ and $b$ lie on opposite sides of $S$,  $|ac| + |cb| \leq (1+\sqrt{2})\cdot |ab|$.
Therefore, the path $\varphi_{ab} = \varphi_{ac}\cup \varphi_{cb}$ is an $a$-$b$-path such that
$|\varphi_{ab}| \leq (1+ \sqrt{2}) \cdot (\sqrt{2}+36)\cdot |ab|$.
\end{proof}

\begin{theorem}
The $\theta_4$-graph of $P$ is a spanner whose stretch factor is at most $$(1+ \sqrt{2}) \cdot (\sqrt{2}+36)\cdot \sqrt{4 + 2\sqrt{2}}\approx 237$$
\end{theorem}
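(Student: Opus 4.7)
The plan is to combine the two main ingredients already assembled in this section: the fact that the $L_\infty$-Delaunay triangulation is a $\cLone$-spanner of the complete Euclidean graph on $P$ (due to Bonichon \etal), and Lemma~\ref{lemma:Bounding L_infty edges}, which shows that every edge of the $L_\infty$-Delaunay triangulation can be replaced by a path in the $\theta_4$-graph of length at most $(1+\sqrt{2})(\sqrt{2}+36)$ times its Euclidean length. Stringing these two approximations together will yield a path whose length multiplies the two constants, exactly as claimed.

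In more detail, I would proceed as follows. Fix two arbitrary points $u,v \in P$. By the Bonichon~\etal\ bound, there exists a $u$-$v$-path $u = w_0, w_1, \dots, w_k = v$ in the $L_\infty$-Delaunay triangulation of $P$ whose total Euclidean length is at most $\cLone \cdot |uv|$. Each consecutive pair $w_iw_{i+1}$ is an edge of the $L_\infty$-Delaunay triangulation, so Lemma~\ref{lemma:Bounding L_infty edges} provides a $w_i$-$w_{i+1}$-path $\varphi_{w_iw_{i+1}}$ in the $\theta_4$-graph of $P$ with
\[
|\varphi_{w_iw_{i+1}}| \;\leq\; (1+\sqrt{2})(\sqrt{2}+36)\cdot |w_iw_{i+1}|.
\]
Concatenating these paths produces a walk from $u$ to $v$ in the $\theta_4$-graph, which contains a simple $u$-$v$-path of length at most
\[
\sum_{i=0}^{k-1} |\varphi_{w_iw_{i+1}}| \;\leq\; (1+\sqrt{2})(\sqrt{2}+36)\sum_{i=0}^{k-1}|w_iw_{i+1}| \;\leq\; (1+\sqrt{2})(\sqrt{2}+36)\,\cLone\cdot |uv|.
\]
Substituting $\cLone = \sqrt{4+2\sqrt{2}}$ and evaluating numerically yields the claimed bound of approximately $237$.

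There is no real obstacle at this final step; all the technical work is hidden inside Lemma~\ref{lemma:One empty quadrant} and Lemma~\ref{lemma:Bounding L_infty edges}. The only small point to be careful about is that concatenating paths need not produce a simple path, but this is harmless since any walk of length $L$ between $u$ and $v$ contains a simple $u$-$v$-path of length at most $L$. Since $u$ and $v$ were arbitrary, the stretch factor of the $\theta_4$-graph is bounded by $(1+\sqrt{2})(\sqrt{2}+36)\sqrt{4+2\sqrt{2}} \approx 237$, which completes the proof.
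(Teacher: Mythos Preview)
Your proposal is correct and follows essentially the same approach as the paper: take a shortest path in the $L_\infty$-Delaunay triangulation (length at most $\cLone\cdot|uv|$ by Bonichon~\etal) and replace each edge by the $\theta_4$-path supplied by Lemma~\ref{lemma:Bounding L_infty edges}. Your extra remark about walks versus simple paths is a harmless refinement the paper omits.
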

\begin{proof}
Let $\nu$ be the shortest path joining $s$ with $t$ in the $L_\infty$-Delaunay triangulation of $P$. Bonichon \emph{et al.}~\cite{StretchFactorL1Linfty} proved that the length of $\nu$ is at most $\sqrt{4 + 2\sqrt{2}} \cdot |st|$.
By replacing every edge in $\nu$ with the path in the $\theta_4$-graph of $P$ that exists by Lemma~\ref{lemma:Bounding L_infty edges}, we obtain an $s$-$t$-path of length at most

\quad$(1+ \sqrt{2}) \cdot (\sqrt{2}+36) \cdot |\nu|\leq (1+ \sqrt{2}) \cdot (\sqrt{2}+36)\cdot \sqrt{4 + 2\sqrt{2}} \cdot |st|$
\end{proof}

\section{Light paths}
We introduce some tools that will help us proving Lemma~\ref{lemma:One empty quadrant} in Section~\ref{sec:One empty quadrant}.

Given a point $p$ of $P$, we call edge $p\n{i}{p}$ an \emph{$i$-edge}.
Let $\varphi$ be a path that follows only $0$- and $1$-edges. A $0$-edge $p\n{0}{p}$ of $\varphi$ is \emph{light} if no edge of $\varphi$ crosses the horizontal ray shooting to the right from $p$. We say that $\varphi$ is a \emph{light} path if all its $0$-edges are light.
In this section we show how to bound the length of a light path with respect to the Euclidean distance between its endpoints.

Notice that every $i$-edge is associated with an empty isosceles right triangle. For a point $p$, the empty triangle generated by its $i$-edge is denoted by $\Delta_i(p)$.

\begin{lemma}\label{lemma:0-edges in Light Paths}
Given a light path $\varphi$, every pair of $0$-edges of $\varphi$ has disjoint orthogonal projection on the line defined by the equation $y = -x$.
\end{lemma}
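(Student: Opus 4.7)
My plan is to show that once $\varphi$ traverses a $0$-edge $pp'$, every subsequent vertex of $\varphi$ is trapped strictly inside $C_0(p)$, and then to use the emptiness of $\Delta_0(p)$ to push its coordinate along $y=-x$ past that of $p'$. Parameterize $y=-x$ by $\pi(z):=z_x-z_y$; up to the factor $1/\sqrt{2}$ this gives the coordinate along $y=-x$, so disjointness of $\pi$-intervals is equivalent to disjointness of the orthogonal projections.

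Since $\varphi$ follows only $0$- and $1$-edges, at every step $v_k\to v_{k+1}$ the vertex $v_{k+1}$ lies in $C_0(v_k)\cup C_1(v_k)$, so the $x$-coordinate is strictly increasing along $\varphi$. Now fix a $0$-edge $pp'$ of $\varphi$ with $p=v_i$ and $p'=v_{i+1}$. I claim by induction on $k\geq i+1$ that $v_k$ lies strictly inside $C_0(p)$, meaning $v_{k,x}>p_x$ and $v_{k,y}<p_y$. The base case $k=i+1$ follows from $p'=\n{0}{p}$ together with the general-position assumption. For the inductive step, assume $v_k$ is strictly inside $C_0(p)$. Then $v_{k+1,x}>v_{k,x}>p_x$, and if we had $v_{k+1,y}>p_y$, the segment $v_kv_{k+1}$ would go from below the line $y=p_y$ to above it while lying entirely in $x>p_x$; hence it would cross the horizontal ray shooting to the right from $p$, contradicting the lightness of $pp'$. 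General position forbids $v_{k+1,y}=p_y$, so $v_{k+1,y}<p_y$ and the induction is complete.

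Next I invoke the emptiness of $\Delta_0(p)$. For $k\geq i+2$ the vertex $v_k$ is strictly inside $C_0(p)$ and distinct from both $p$ and $p'$, so it cannot lie in the interior of $\Delta_0(p)$; moreover, general position forbids $v_k$ from lying on the slope-$+1$ hypotenuse of $\Delta_0(p)$ through $p'$. We conclude $\pi(v_k)>\pi(p')$.

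To finish, let $qq'$ be any other $0$-edge of $\varphi$; after relabeling if necessary, write $q=v_j$ with $j>i$. If $j=i+1$ then $q=p'$ and the two projection intervals share only the endpoint $\pi(p')$; if $j\geq i+2$ then $\pi(q)>\pi(p')$, so $[\pi(p),\pi(p')]$ and $[\pi(q),\pi(q')]$ are strictly disjoint. Either way, the two $0$-edges project to disjoint segments of $y=-x$. The main subtlety is the inductive confinement step: without lightness of $pp'$, $\varphi$ could climb back above the horizontal ray through $p$ via a forward $1$-edge and later drop below it to start a second $0$-edge with smaller $\pi$-coordinate, violating the statement.
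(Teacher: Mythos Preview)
Your proof is correct and follows essentially the same approach as the paper's: both arguments first confine the tail of $\varphi$ after a $0$-edge $pp'$ to the cone $C_0(p)$ (the paper states this in one line from $x$-monotonicity and lightness, you carry out the induction explicitly), and then use the emptiness of $\Delta_0(p)$ to force every subsequent vertex past the slope-$+1$ line through $p'$, which is exactly the paper's ``diagonal line'' $\nu_{p_0}$ separating the slabs. Your use of the coordinate $\pi(z)=z_x-z_y$ is just an algebraic rendering of the paper's geometric slab picture.
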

\begin{proof}
Let $s$ and $t$ be the endpoints of $\varphi$.
Let $p\n{0}{p}$ be any $0$-edge of $\varphi$ and let $\nu_{\n{0}{p}}$ be the \emph{diagonal line} extending the hypotenuse of $\Delta_0(p)$, i.e., $\nu_{\n{0}{p}}$ is a line with slope $+1$ passing through $\n{0}{p}$. 
Let $\gamma$ be the path contained in $\varphi$ that joins $\n{0}{p}$ with $t$.
We claim that every point in $\gamma$ lies below $\nu_{\n{0}{p}}$. If this claim is true, the diagonal lines constructed from the empty triangles of every $0$-edge in $\varphi$ split the plane into disjoint slabs, each containing a different $0$-edge of $\varphi$. Thus, their projection on the line defined by the equation $y=-x$ must be disjoint.

To prove that every point in $\gamma$ lies below $\nu_{\n{0}{p}}$, notice that every point in $\gamma$ must lie to the right of $p$ since $\varphi$ is $x$-monotone, and below $p$ since $p\n{0}{p}$ is light, i.e., $\gamma$ is contained in $C_0(p)$. Since $\Delta_0(p)$ is empty, no point of $\gamma$ lies above $\nu_{\n{0}{p}}$ and inside $C_0(p)$ yielding our claim.
\end{proof}

Given a point $w$ of $P$, we say that a point $p$ of $P$ is \emph{$w$-protected} 
if $C_1(p)$ contains no point of $P$ below or on $\ellminus_w$, recall that $\ellminus_w$ is the line with slope $-1$ passing through $w$. 
In other words, a point $p$ is $w$-protected if either $C_1(p)$ is empty or $\n{1}{p}$ lies above $\ellminus_w$. Moreover, every point lying above $\ellminus_w$ is $w$-protected and no point in $C_3(w)$ is $w$-protected.

Given two point $s$ and $t$ such that $s$ lies to the left of $t$, 
we aim to construct a path joining $s$ with $t$ in the $\theta_4$-graph of $P$. 
The role of $t$-protected points will be central in this construction. However, as a first step, we relax our goal and prove instead the existence of a light path $\mpath{s}{t}$ going from $s$ towards $t$ that does not necessarily end at $t$.

To construct $\mpath{s}{t}$, start at a point $z=s$ and repeat the following steps until reaching either $t$ or a $t$-protected point $w$ lying to the right of $t$.
\begin{itemize}
\item If $z$ is not $t$-protected, then follow its $1$-edge, i.e., let $z = \n{1}{z}$. 
\item If $z$ is $t$-protected, then follow its $0$-edge, i.e., let $z = \n{0}{z}$. 
\end{itemize}
The pseudocode of this algorithm can be found in Algorithm~\ref{alg:MonotonePath}.

\begin{algorithm}
  \begin{algorithmic}[1]
    \STATE Let $z = s$.
    \STATE Append $s$ to $\mpath{s}{t}$.
    \WHILE{$z\neq t$ and $z$ is not a $t$-protected point lying to the right of $t$}\label{step:While step}
    \STATE \textbf{if} $z$ is $t$-protected \textbf{then} $z = \n{0}{z}$ \textbf{else} $z = \n{1}{z}$\label{step:If statement}
            \STATE Append $z$ to $\mpath{s}{t}$.
    \ENDWHILE
    \RETURN $\mpath{s}{t}$
  \end{algorithmic}
\caption{Given two points $s$ and $t$ of $P$ such that $s$ lies to the left of $t$, algorithm to compute the path $\mpath{s}{t}$}
\label{alg:MonotonePath}
\end{algorithm}

\vspace{-.1in}
\begin{lemma}\label{lemma:Properties of mpath}
Let $s$ and $t$ be two points of $P$ such that $s$ lies to the left of $t$.
Algorithm~\ref{alg:MonotonePath} produces a light $x$-monotone path $\mpath{s}{t}$ joining $s$ with a $t$-protected point $w$ such that either $w=t$ or $w$ lies to the right of $t$. Moreover, every edge on $\mpath{s}{t}$ is contained in $\sq{t}{s}$.
\end{lemma}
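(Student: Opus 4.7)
The plan is to establish, in order, (i) termination of Algorithm~\ref{alg:MonotonePath} together with $x$-monotonicity of the resulting path, (ii) the promised form of the endpoint $w$, (iii) lightness of $\mpath{s}{t}$, and (iv) containment of its edges in $\sq{t}{s}$.

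For (i)--(ii), the key observation is that every $0$-edge and every $1$-edge lands strictly in the right open half-plane of its source (by general position), so each iteration strictly increases the $x$-coordinate of $z$; finiteness of $P$ then forces termination. I also need the requested edge to exist at every step. If $z$ is not $t$-protected, $C_1(z)\cap P \ne \emptyset$ by the very definition of $t$-protection, so $\n{1}{z}$ is defined. If $z$ is $t$-protected and the loop has not exited, then $z\ne t$ and $z_x < t_x$; the $t$-protection of $z$ forces $t \notin C_1(z)$, which combined with $z_x < t_x$ yields $t \in C_0(z)$, so $\n{0}{z}$ is defined. The exit condition of the while loop directly delivers the description of $w$, and $w = t$ is itself $t$-protected since every $P$-point of $C_1(t)$ lies strictly above $\ellminus_t$ by general position.

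For (iii), fix any $0$-edge $z\n{0}{z}$ of $\mpath{s}{t}$. Previous edges have both endpoints with $x \le z_x$ by $x$-monotonicity and so cannot cross the horizontal ray out of $z$ going right. The nontrivial part is to show that every vertex $u$ appearing after $z$ on the path satisfies $u_y < z_y$; the subsequent edges are then segments between such vertices and stay below the ray. I will take the first such $u$ with $u_y > z_y$ (strict, by general position) and derive a contradiction. Since $0$-edges strictly decrease $y$, $u$ must be the head of a $1$-edge $u''\n{1}{u''}$, so $u''$ is not $t$-protected. On the one hand, $u_x > z_x$ and $u_y > z_y$ place $u$ in $C_1(z)$, and $t$-protection of $z$ then forces $u_x + u_y > t_x + t_y$. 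On the other hand, because $u''$ is not $t$-protected, some $q \in C_1(u'')\cap P$ satisfies $q_x + q_y \le t_x + t_y$; since $\n{1}{u''}=u$ minimizes $x+y$ over $C_1(u'')\cap P$, we get $u_x + u_y \le t_x + t_y$ --- the desired contradiction. This is the step I expect to be the main obstacle, since a direct induction on $y$-coordinates across mixed $0$- and $1$-edges looks awkward, while the ``first violating vertex'' trick collapses the issue to a single $1$-edge.

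For (iv), write $\sq{t}{s} = [t_x - r, t_x + r] \times [t_y - r, t_y + r]$ and argue by induction along the path that every vertex lies in $\sq{t}{s}$; convexity of $\sq{t}{s}$ then delivers containment of edges. The left side is never crossed thanks to $x$-monotonicity. For a $0$-edge $z\n{0}{z}$, the $t$-protection of $z$ together with $z_x < t_x$ yields $t \in C_0(z)$, and then the minimality of the bisector projection of $\n{0}{z}$ gives $\n{0}{z}_x - \n{0}{z}_y \le t_x - t_y$; combined with $\n{0}{z}_x \ge z_x \ge t_x - r$ and $\n{0}{z}_y \le z_y \le t_y + r$, this produces the remaining three bounds. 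For a $1$-edge $z\n{1}{z}$, the non-$t$-protection of $z$ analogously gives $\n{1}{z}_x + \n{1}{z}_y \le t_x + t_y$, and combined with $\n{1}{z}_x \ge z_x \ge t_x - r$ and $\n{1}{z}_y \ge z_y \ge t_y - r$ this again delivers the remaining three bounds.
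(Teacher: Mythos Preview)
Your proof is correct and follows essentially the same approach as the paper's. Both rely on the same three observations: $0$- and $1$-edges strictly increase $x$ (giving monotonicity and termination); a $0$-edge source $z$ is $t$-protected so $C_1(z)$ contains no path vertex below $\ellminus$, while every $1$-edge target lies on or below $\ellminus$ (giving lightness); and the bisector-projection minimality of $\n{0}{z}$ and $\n{1}{z}$ together with $t\in C_0(z)$ (resp.\ a witness in $C_1(z)$ below $\ellminus$) yields $\n{0}{z}_x-\n{0}{z}_y\le t_x-t_y$ (resp.\ $\n{1}{z}_x+\n{1}{z}_y\le t_x+t_y$), which drives containment in $\sq{t}{s}$. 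Your first-violator packaging for lightness and your explicit coordinate-wise induction for containment are somewhat more formal than the paper's geometric phrasing (``$\n{0}{v}$ lies above $\ellplus$, otherwise $t\in\Delta_0(v)$''; ``$1$-edges lie below $\ellminus$ and to the right of $s$''), but the substance is the same.
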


\begin{wrapfigure}[13]{r}{0.36\textwidth}
\centering
\includegraphics{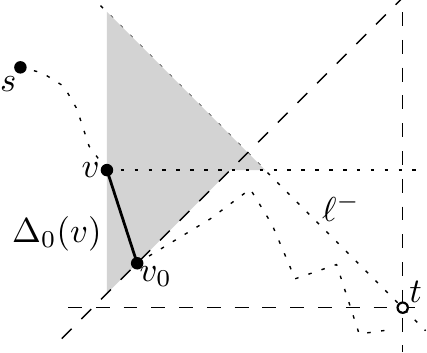}
  \vspace{-.2in}
\caption{If $v$ is a $t$-protected point, then edge $v\n{0}{v}$ is light in any path $\mpath{s}{t}$ that contains it.}
  \label{fig:Light Edge}
\end{wrapfigure}

\noindent\emph{Proof.} By construction, Algorithm~\ref{alg:MonotonePath} finishes only when reaching either $t$ or a $t$-protected point lying to the right of $t$.
Since every edge of $\mpath{s}{t}$ is either a $0$-edge or a $1$-edge traversed from left to right, $\mpath{s}{t}$ is $x$-monotone.

The left endpoint of every $0$-edge in $\mpath{s}{t}$ lies  in $C_2(t)$ as it most be $t$-protected and no $t$-protected point lies in $C_3(t)$.
Thus, if $v\n{0}{v}$ is a $0$-edge, then $v$ lies in $C_2(t)$ and hence, $\n{0}{v}$ lies inside $\sq{t}{s}$ and above $\ellplus$. Otherwise $t$ would lie inside $\Delta_0(v)$. Therefore, every $0$-edge in $\mpath{s}{t}$ is contained in $\sq{t}{s}$.

Every $1$-edge in $\mpath{s}{t}$ has its two endpoints lying below $\ellminus$; otherwise, we followed the $1$-edge of a $t$-protected point which is not allowed by Step~\ref{step:If statement} of Algorithm~\ref{alg:MonotonePath}. Thus, every $1$-edge in $\mpath{s}{t}$ lies below $\ellminus$ and to the right of $s$. As $1$-edges are traversed from bottom to top and the $0$-edges of $\mpath{s}{t}$ are enclosed by $\sq{t}{s}$, every $1$-edge in $\mpath{s}{t}$ is contained in $\sq{t}{s}$.

Let $v\n{0}{v}$ be any $0$-edge of $\mpath{s}{t}$. 
Since we followed the $0$-edge of $v$, we know that $v$ is $t$-protected and hence no point of $P$ lies in $C_1(v)$ and below $\ellminus$. 
As every $1$-edge has its two endpoints lying below $\ellminus$ and $\mpath{s}{t}$ is $x$-monotone, no $1$-edge in $\mpath{s}{t}$ can have an endpoint in $C_1(v)$.
In addition, every $0$-edge of $\mpath{s}{t}$ joins its left endpoint with a point below it. Thus, no $0$-edge of $\mpath{s}{t}$ can cross the ray shooting to the right from $v$. Consequently, $v\n{0}{v}$ is light and hence $\mpath{s}{t}$ is a light path; see Fig~\ref{fig:Light Edge}.
\qed\vspace{.2in}

Given two points $p$ and $q$, let $\xdist{p}{q}$ and $\ydist{p}{q}$ be the absolute differences between their $x$- and $y$-coordinates, respectively, i.e., $\Lone{p}{q} = \xdist{p}{q} + \ydist{p}{q}$.

\begin{lemma}\label{lemma:Light spanners for t-protected points}
Let $s$ and $t$ be two points of $P$ such that $s$ lies to the left of $t$.
If $s$ is $t$-protected, then $|\mpath{s}{t}| \leq 3 \cdot \Lone{s}{t}$.
\end{lemma}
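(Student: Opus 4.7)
The plan is to split the edges of $\mpath{s}{t}$ into $0$-edges and $1$-edges, bound each contribution via a projection argument, and then sum. First, since $s$ is $t$-protected, $s$ lies to the left of $t$, and no $t$-protected point lies in $C_3(t)$, I conclude that $s \in C_2(t)$, so $\Lone{s}{t} = \xdist{s}{t} + \ydist{s}{t}$. By Lemma~\ref{lemma:Properties of mpath}, $\mpath{s}{t}$ is a light $x$-monotone path contained in $\sq{t}{s}$.

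For the $0$-edges, I would invoke Lemma~\ref{lemma:0-edges in Light Paths} (which applies since $\mpath{s}{t}$ is light): the $0$-edges have pairwise disjoint orthogonal projections on the line $y = -x$. Each $0$-edge, viewed as a displacement vector $(a, -b)$ with $a, b > 0$, has Euclidean length $\sqrt{a^2 + b^2} \le a + b$, which equals $\sqrt{2}$ times the length of its projection on $y = -x$. Since every starting vertex of a $0$-edge is $t$-protected and not to the right of $t$ (otherwise Algorithm~\ref{alg:MonotonePath} would halt), it must lie in $C_2(t)$; by the proof of Lemma~\ref{lemma:Properties of mpath}, its endpoint lies above $\ellplus$ inside $\sq{t}{s}$. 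Hence every $0$-edge lies in $C_2(t) \cap \sq{t}{s}$, a region whose projection on $y = -x$ has length at most $\sqrt{2}\,\max(\xdist{s}{t}, \ydist{s}{t})$, bounding the total $0$-edge length.

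For the $1$-edges, I would first recall (from the proof of Lemma~\ref{lemma:Properties of mpath}) that both endpoints of every $1$-edge of $\mpath{s}{t}$ lie below $\ellminus_t$. I would then establish an analogue of Lemma~\ref{lemma:0-edges in Light Paths}: the $1$-edges also have pairwise disjoint orthogonal projections on the line $y = x$. The crucial step is to show that between any two consecutive $1$-edges, the intermediate $0$-edges (each starting from a $t$-protected point in $C_2(t)$ and ending above $\ellplus$) cannot push $x + y$ below its value at the endpoint of the first $1$-edge; this uses the emptiness of the triangles $\Delta_1(v)$ together with the $x$-monotonicity of the path. Combined with each $1$-edge's length being at most $\sqrt{2}$ times its projection on $y = x$, and with the $1$-edges being confined to the half of $\sq{t}{s}$ below $\ellminus_t$, this yields a bound on the total $1$-edge length.

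Adding the two contributions, with careful accounting that uses $s \in C_2(t)$ to constrain both regions inside $\sq{t}{s}$, one arrives at $|\mpath{s}{t}| \le 3\,\Lone{s}{t}$. The main obstacle will be the disjoint-projection property for the $1$-edges: unlike for the $0$-edges, it does not follow from the lightness of $\mpath{s}{t}$ (since the light condition concerns only $0$-edges), and its proof requires an independent geometric argument that carefully tracks $x + y$ through the intermediate $0$-edges, exploiting the specific structure imposed by Algorithm~\ref{alg:MonotonePath}.
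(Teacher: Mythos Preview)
Your treatment of the $0$-edges mirrors the paper's argument (Lemma~\ref{lemma:0-edges in Light Paths} plus a projection bound on $\ell^-$), so that part is fine, even if your bound $2\max(\xdist{s}{t},\ydist{s}{t})$ is looser than the paper's $\Lone{s}{t}$.

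The real problem is with the $1$-edges. Your central claim---that the $1$-edges of $\mpath{s}{t}$ have pairwise disjoint orthogonal projections on the line $y=x$---is \emph{false}. Here is a concrete configuration. Take
\[
s=(0,100),\quad a=(5,3),\quad b=(6,3.5),\quad c=(7,-10),\quad d=(19,0.5),\quad t=(20,0).
\]
One checks directly that $s$ is $t$-protected and that Algorithm~\ref{alg:MonotonePath} produces the path $s\to a\to b\to c\to d\to t$, whose $1$-edges are $a\to b$ and $c\to d$. The values of $x+y$ along these two $1$-edges range over $[8,9.5]$ and $[-3,19.5]$ respectively, so their projections on $y=x$ overlap. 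The mechanism you hoped for---that the empty triangles $\Delta_1(v)$ would keep $x+y$ from dropping---fails because the intermediate $0$-edge $b\to c$ takes the path \emph{below} the starting point $a$ of the earlier $1$-edge, outside the region where emptiness of $\Delta_1(a)$ gives any information.

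The paper avoids this difficulty by treating the $1$-edges asymmetrically: it projects them onto the \emph{horizontal} line through $t$, where disjointness is immediate from $x$-monotonicity, and then bounds each maximal run $\gamma_i$ of $1$-edges by $2\,\xdist{p^i}{q^i}$. The key geometric fact it uses is that the slope of the segment $p^iq^i$ is less than $1$; this follows because $p^i$ is reached by a $0$-edge from a $t$-protected vertex $v$, and emptiness of $\Delta_0(v)$ forces $q^i$ (which lies in $C_0(v)$) below the slope-$1$ line through $p^i$. Summing gives $\sum_i|\gamma_i|\le 2\,\Lone{s}{t}$, which together with the $0$-edge bound yields $3\,\Lone{s}{t}$. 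Your symmetric approach cannot be repaired without an additional idea; the paper's asymmetric argument is what makes the bound go through.
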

\begin{proof}
To bound the length of $\mpath{s}{t}$, we bound the length of its $0$-edges and the length of its $1$-edges separately.
Let $Z$ be the set of all $0$-edges in $\mpath{s}{t}$ and consider their orthogonal projection on $\ellminus$.
By Lemma~\ref{lemma:0-edges in Light Paths}, all these projections are disjoint. 
Moreover, the length of every $0$-edge in $Z$ is at most $\sqrt{2}$ times the length of its projection. 
Let $s_\perp$ be the orthogonal projection of $s$ on $\ellminus$ and let $\delta$ be the segment joining $s_\perp$ with  $t$. 
Since $s$ is $t$-protected and $\mpath{s}{t}$ is $x$-monotone, the orthogonal projection of every $0$-edge of $Z$ on $\ellminus$ is contained in $\delta$ and hence $\sum_{e\in Z}|e| \leq \sqrt{2}\cdot|\delta|$.
Since $|\delta|= \Lone{s}{t}/\sqrt{2}$ as depicted in Fig.~\ref{fig:Bounding Z and O}(a), we conclude that $\sum_{e\in Z} |e| \leq \Lone{s}{t}$.

Let $O$ be the set of all $1$-edges in $\mpath{s}{t}$ and let $\eta$ be the horizontal line passing through $t$. Since $\mpath{s}{t}$ is $x$-monotone, the orthogonal projections of all edges in $O$ on $\eta$ are disjoint.
Let $\gamma_0, \ldots, \gamma_k$ be the connected components induced by $O$, i.e., the set of maximal connected paths that can be formed by the $1$-edges in $O$; see Fig.~\ref{fig:Bounding Z and O}(b). We claim that the slope of the line joining the two endpoints $p^i, q^i$ of every $\gamma_i$ is smaller than $1$.
If this claim is true, the length of every $\gamma_i$ is bounded by $\xdist{p^i}{q^i} + \ydist{p^i}{q^i}\leq 2\cdot\xdist{p^i}{q^i}$ as each $\gamma_i$ is $x$- and $y$-monotone.

To prove that the slope between $p^i$ and $q^i$ is smaller than $1$, let $v\n{0}{v}$ be the $0$-edge of $\mpath{s}{t}$ such that $\n{0}{v} = p^i$. Since $v\n{0}{v}$ is in $\mpath{s}{t}$, $v$ is $t$-protected by Step~\ref{step:If statement} of Algorithm~\ref{alg:MonotonePath} and hence, as $\Delta_0(v)$ is empty, $q^i$ must lie below the line with slope $+1$ passing through $p^i$ yielding our claim.

Let $\omega$ be the segment obtained by shooting a ray from $t$ to the left until hitting the boundary of $\sq{t}{s}$.
We bound the length of all edges in $O$ using the length of $\omega$. 
Notice that the orthogonal projection of every $\gamma_i$ on $\eta$ is contained in $\omega$, except maybe for $\gamma_k$ whose right endpoint $q^k$ could lie below and to the right of $t$. Two cases arise: If the projection of $\gamma_k$ on $\eta$ is contained in $\omega$, then $\sum_{i=0}^k |\gamma_i| \leq  \sum_{i=0}^{k}  2\cdot\xdist{p^i}{q^i}   \leq 2 \cdot|\omega|$.
Otherwise, since $q_k$  is $t$-protected, $q_k$ lies below $\ellminus$ and hence $\Lone{p^k}{q^k}\leq \Lone{p^k}{t}$.
Moreover, $p^k$ must lie above $\ellplus$ as $p^k$ is reached by a $0$-edge coming from above $\eta$, i.e., $\ydist{p^k}{t}<\xdist{p^k}{t}$. Therefore, $$|\gamma_k| \leq \Lone{p^k}{q^k} \leq \Lone{p^k}{t} = \xdist{p^k}{t} + \ydist{p^k}{t} \leq 2\cdot \xdist{p^k}{t}$$
Consequently, $\sum_{i=0}^k |\gamma_i| \leq 2\cdot \xdist{p^k}{t} + \sum_{i=0}^{k-1}  2\cdot\xdist{p^i}{q^i}   \leq 2 \cdot|\omega|$.
Since $|\omega|\leq \Lone{s}{t}$, we get that $\sum_{e\in O}|e| = \sum_{i=0}^k |\gamma_i| \leq 2\cdot \Lone{s}{t}$.
Thus, $\mpath{s}{t}$ is a light path of length at most $\sum_{e\in O}|e| + \sum_{e\in Z}|e| \leq 3\cdot \Lone{s}{t}$.
\end{proof}

\begin{figure}[t]
\centering
\includegraphics{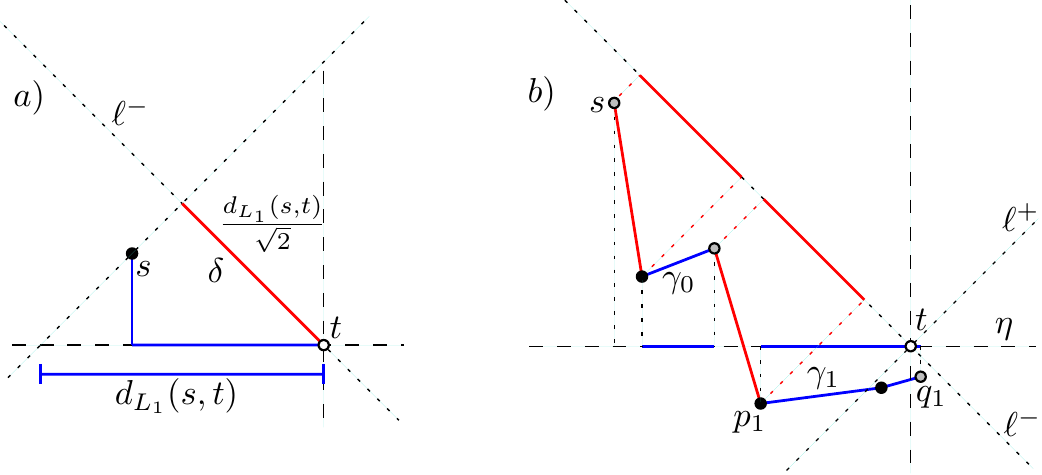}
\caption{\small $a)$ The segment $\delta$ having length $\Lone{s}{t}/\sqrt{2}$.
$b)$ The $0$-edges of $\mpath{s}{t}$ have disjoint projections on $\ellminus$ and the $1$-edges have disjoint projections on the horizontal line passing through $t$. The slope between the endpoints of the maximal paths $\gamma_0$ and $\gamma_1$ is less than 1. }
\label{fig:Bounding Z and O}
\end{figure}

By the construction of the light path in Algorithm~\ref{alg:MonotonePath}, we observe the following.
\begin{lemma}\label{lemma:Endpoint of mpath}
Let $s$ and $t$ be two points of $P$ such that $s$ lies to the left of $t$. If the right endpoint $w$ of $\mpath{s}{t}$ is not equal to $t$, then $w$ lies either above $\ellplus$ if $w\in C_1(t)$, or below $\ellminus$ if $w\in C_0(t)$.
\end{lemma}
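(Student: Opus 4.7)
\noindent\emph{Proof plan.}
The plan is to examine the final edge of $\mpath{s}{t}$, call it $pw$, and case-split on whether it is a $0$-edge or a $1$-edge. Since $s$ lies to the left of $t$, the while-loop of Algorithm~\ref{alg:MonotonePath} is entered at least once, so $w$ is indeed appended to the path via some such edge $pw$. Because the algorithm terminated with $w\neq t$, the stopping criterion in Step~\ref{step:While step} forces $w$ to be $t$-protected and strictly to the right of $t$; together with the general-position hypothesis (no two points on a horizontal line, which is parallel to a cone boundary), this places $w$ in $C_0(t)\cup C_1(t)$, so it remains to pin down which cone.

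If $pw$ is a $0$-edge, then Step~\ref{step:If statement} tells us $p$ is $t$-protected, and the argument used in the proof of Lemma~\ref{lemma:Properties of mpath} places $p$ in $C_2(t)$. Since $t\in C_0(p)$ and $\Delta_0(p)$ is empty, $t$ cannot lie inside $\Delta_0(p)$; unpacking this constraint via the projection onto the bisector of $C_0(p)$ is exactly the statement that $w=\n{0}{p}$ lies above $\ellplus$. Combined with $w$ being to the right of $t$, this yields $w\in C_1(t)$ above $\ellplus$, which is the first alternative of the lemma.

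If instead $pw$ is a $1$-edge, then $p$ is not $t$-protected, so there exists a point $q\in P\cap C_1(p)$ lying weakly below $\ellminus$; general position (the line $\ellminus$ has slope $-1$ and is therefore parallel to a cone bisector) upgrades this to strictly below, and rules out $q=w$. Since $w=\n{1}{p}$ minimises the projection onto the bisector of $C_1(p)$, we obtain $w_x+w_y\leq q_x+q_y<t_x+t_y$, so $w$ lies strictly below $\ellminus$; combined with $w$ being to the right of $t$, this forces $w\in C_0(t)$ below $\ellminus$, the second alternative. Both cases merely recycle the triangle-emptiness and projection-comparison arguments already developed for Lemma~\ref{lemma:Properties of mpath}, so I foresee no real obstacle; the only delicate point is using general position to turn weak inequalities into the strict ``above $\ellplus$'' and ``below $\ellminus$'' required by the statement.
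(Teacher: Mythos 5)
Your proof is correct and takes essentially the same approach as the paper's: you case-split on the type of the final edge $pw$ and deduce which cone $w$ falls in, whereas the paper case-splits on whether $w\in C_1(t)$ or $w\in C_0(t)$ and deduces the edge type, but both arguments rest on the same two facts (emptiness of $\Delta_0(p)$ with $t\in C_0(p)$ forces $w$ above $\ellplus$; failure of $t$-protectedness of $p$ together with the minimality of $\n{1}{p}$ forces $w$ below $\ellminus$). The only negligible imprecision is that the witness $q$ for non-protectedness could be $t$ itself, so $q_x+q_y<t_x+t_y$ need not hold, but your appeal to general position combined with $w\neq t$ still gives the strict inequality for $w$.
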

\begin{proof}
If $w$ lies in $C_1(t)$, then by Step~\ref{step:If statement} of Algorithm~\ref{alg:MonotonePath}, $w$ was reached by a $0$-edge $pw$ such that $p$ is a $t$-protected point lying above and to the left of $t$. As $\Delta_0(p)$ is empty, $t$ lies below the hypotenuse of $\Delta_0(p)$ and hence $w$ lies above $\ellplus$.

Assume that $w$ lies in $C_0(t)$.
Notice that $w$ is the only $t$-protected point of $\mpath{s}{t}$ that lies to the right of $t$; otherwise, Algorithm~\ref{alg:MonotonePath} finishes before reaching $w$.
By Step~\ref{step:If statement} of Algorithm~\ref{alg:MonotonePath}, every $0$-edge of $\mpath{s}{t}$ needs to have a $t$-protected left endpoint. Moreover, every $t$-protected point of $\mpath{s}{t}$, other that $w$, lies above and to the left of $t$. 
Therefore, $w$ is not reached by a $0$-edge of $\mpath{s}{t}$, i.e., $w$ must be the right endpoint of a $1$-edge $pw$ of $\mpath{s}{t}$.
Notice that $w$ cannot lie above $\ellminus$ since otherwise $p$ is $t$-protected and hence Algorithm~\ref{alg:MonotonePath} finishes before reaching $w$ yielding a contradiction. Thus, $w$ lies below $\ellminus$.
\end{proof}

\section{One empty quadrant}\label{sec:One empty quadrant}

In this section, we provide the proof of Lemma~\ref{lemma:One empty quadrant}. Before stepping into the proof, we need one last definition.
Given a point $p$ of $P$, the \emph{$\max_1$-path} of $p$ is the longest path having $p$ as an endpoint that consists only of $1$-edges and contains the edge $p\n{1}{p}$. We restate Lemma~\ref{lemma:One empty quadrant} using the notions of $t$-protected and $s$-$t$-path.
\\

\noindent \textbf{Lemma~\ref{lemma:One empty quadrant}.}
\emph{Let $s$ and $t$ be two points of $P$ such that $t$ lies in $C_0(s)$.
If the top-right quadrant of $\sq{t}{s}$ is empty and $s$ is $t$-protected, then there is an $s$-$t$-path in the $\theta_4$-graph of $P$ of length at most $18 \cdot\Lone{s}{t}$.}\\

\noindent\emph{Proof.}
Since $s$ is $t$-protected, no point of $P$ lies above $s$, to the right of $s$ and below $\ellminus$; see the dark-shaded region in Fig.~\ref{fig:Base case}.
Let $R$ be the smallest axis-aligned rectangle enclosing $s$ and $t$ and let $k$ be the number of $t$-protected points inside $R$, by the general position assumption, these points are strictly contained in $R$.
We prove the lemma by induction on $k$.

\begin{wrapfigure}[13]{r}{0.38\textwidth}
\centering
\includegraphics{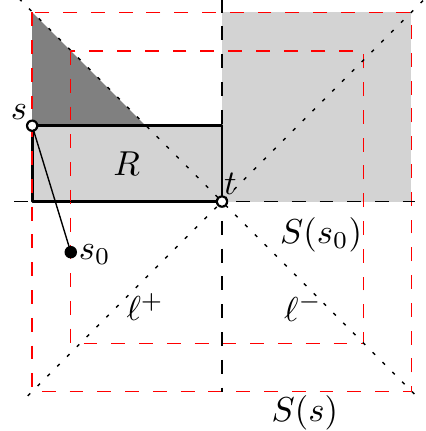}
    \vspace{-.1in}
     \caption{Base case.}
     \label{fig:Base case}
\end{wrapfigure}

\textbf{Base case: } Assume that $R$ contains no $t$-protected point, i.e., $k=0$. We claim that $R$ must be empty and we prove it by contradiction. Let $q$ be a point in $R$ and note that $q$ cannot lie above $\ellminus$ as it would be $t$-protected yielding a contradiction. 
If $q$ lies below $\ellminus$, we can follow the $\max_1$-path from $q$ until reaching a $t$-protected point $p$ lying below $\ellminus$. Since $s$ is $t$-protected, $p$ must lie inside $R$ which is also a contradiction. Thus, $R$ must be empty.

Assume that $\n{0}{s}\neq t$ since otherwise the result is trivial.
As $R$ is empty and $\n{0}{s}\neq t$, $\n{0}{s}$ lies below $t$ and above $\ellplus$. Moreover, no point of $P$ lies above $t$, below $\ellminus$ and inside $\sq{t}{\n{0}{s}}$ since $s$ is $t$-protected. 
Thus, if we think of the set of points $P$ rotated 90 degrees clockwise around $t$, Lemma~\ref{lemma:Light spanners for t-protected points} and Lemma~\ref{lemma:Endpoint of mpath} guarantee the existence of an $\n{0}{s}$-$t$-path $\gamma$ of length at most $3\cdot \Lone{\n{0}{s}}{t}$. 
Since $\n{0}{s}$ lies above $\ellplus$, $\Lone{s}{\n{0}{s}}\leq \Lone{s}{t}$. Furthermore, $\Lone{\n{0}{s}}{t}\leq 2\cdot \Lone{s}{t}$ as $\n{0}{s}$ lies inside $\sq{t}{s}$. Thus, by joining $s\n{0}{s}$ with $\gamma$, we obtain an $s$-$t$-path of length at most $7\cdot \Lone{s}{t}$.


\textbf{Inductive step:} We aim to show the existence of a path $\gamma$ joining $s$ with a $t$-protected point $w\in R$ such that the length of $\gamma$ is at most $18\cdot \Lone{s}{w}$. 
If this is true, we can merge $\gamma$ with the $w$-$t$-path $\varphi$ existing by the induction hypothesis to obtain the desired $s$-$t$-path with length at most $18 \cdot\Lone{s}{t}$. We analyze two cases depending on the position of $\n{0}{s}$ with respect to $R$.

\begin{wrapfigure}[11]{r}{0.35\textwidth}
\centering
\includegraphics{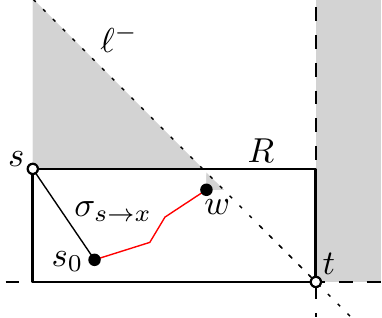}
\vspace{-.1in}
     \caption{Case 1.}
     \label{fig:Case 1}
\end{wrapfigure}

\textbf{Case 1. } Assume that $\n{0}{s}$ lies inside $R$.
If $\n{0}{s}$ lies above $\ellminus$, then $\n{0}{s}$ is $t$-protected and hence we are done after applying our induction hypothesis on $s_0$.
If $\n{0}{s}$ lies below $\ellminus$, then we can follow its $\max_1$-path to reach a $t$-protected point $w$ that must lie inside $R$ as $s$ is $t$-protected.
By running Algorithm~\ref{alg:MonotonePath} on $s$ and $w$, we obtain a path $\mpath{s}{w}$ that goes through the edge $s\n{0}{s}$ and then follows the $\max_1$-path of $\n{0}{s}$ until reaching $w$; see Fig.~\ref{fig:Case 1}. 

Since $s$ is $t$-protected and $w$ lies below $\ellminus$, $s$ is also $w$-protected. Therefore, Lemma~\ref{lemma:Light spanners for t-protected points} guarantees that $|\mpath{s}{w}| \leq 3\cdot\Lone{s}{w}$.
By induction hypothesis on $w$, there is a $w$-$t$-path $\varphi$ such that $|\varphi|\leq 18\cdot\Lone{w}{t}$. 
As $w$ lies in $R$, by joining $\mpath{s}{w}$ with $\varphi$ we obtain the desired $s$-$t$-path of length at most $18 \cdot\Lone{s}{t}$.

\textbf{Case 2.} Assume that $\n{0}{s}$ does not lie in $R$. This implies that $\n{0}{s}$ lies below $t$. Assume also that $\mpath{s}{t}$ does not reach $t$; otherwise we are done since $|\mpath{s}{t}|\leq 3\cdot\Lone{s}{t}$.
Thus, as the top-right quadrant of $\sq{t}{s}$ is empty, $\mpath{s}{t}$ ends at a $t$-protected point $z$ lying in the bottom-right quadrant of $\sq{t}{s}$. 
We consider two sub-cases depending on whether $\mpath{s}{t}$ contains a point inside $R$ or not. 

\begin{wrapfigure}[12]{r}{0.45\textwidth}
\centering
  \vspace{-.1in}
\includegraphics{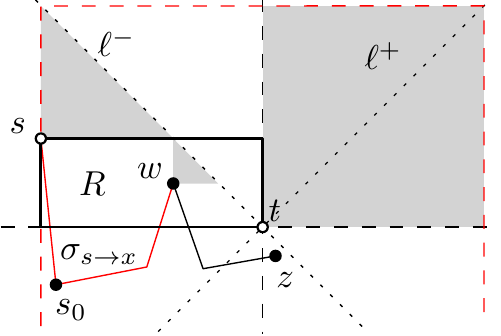}
\vspace{-.0in}
\caption{Case 2.1.}
\label{fig:Case2.1}
\end{wrapfigure}

\textbf{Case 2.1.} If $\mpath{s}{t}$ contains a point inside $R$, let $w$ be the first $t$-protected point of $\mpath{s}{t}$ after $s$ and note that $w$ also lies inside $R$ since $s$ is $t$-protected.
Notice that the part of $\mpath{s}{t}$ going from $s$ to $w$ is in fact equal to $\mpath{s}{w}$ since $w$ lies above $t$ and only $1$-edges were followed after $\n{0}{s}$ by Step~\ref{step:If statement} of Algorithm~\ref{alg:MonotonePath}; see Fig.~\ref{fig:Case2.1}.
Thus, as $s$ is also $w$-protected, the length of $\mpath{s}{w}$ is bounded by $3\cdot\Lone{s}{w}$  by Lemma~\ref{lemma:Light spanners for t-protected points}. Hence, we can apply the induction hypothesis on $w$ as before and obtain the desired $s$-$t$-path.

\textbf{Case 2.2.} If $\mpath{s}{t}$ does not contain a point inside $R$, then $\mpath{s}{t}$ follows only $1$-edges from $\n{0}{s}$ until reaching $z$ in the bottom-right quadrant of $\sq{t}{s}$; see Fig.~\ref{fig:Induction step}(a).
Let $P^*$ be the set of points obtained by reflecting $P$ on the line $\ellplus$. Since $z$ remains $t$-protected after the reflection,
we can use Algorithm~\ref{alg:MonotonePath} to produce a path $\mpath{z}{t}^*$ in the $\theta_4$-graph of $P^*$.
Let $\gamma_{z\to t}$ be the path in the $\theta_4$ graph of $P$ obtain by reflecting $\mpath{z}{t}^*$ on $\ellplus$.
Note that $\gamma_{z\to t}$ ends at a point $w$ such that $w$ is either equal to $t$ or $w$ lies in the top-left quadrant of $\sq{t}{s}$ since the top-right quadrant of $\sq{t}{s}$ is empty.
Since $z$ lies inside $\sq{t}{s}$, $\Lone{z}{t}\leq 2\cdot \Lone{s}{t}$. Therefore, by Lemma~\ref{lemma:Light spanners for t-protected points}, the length of $\mpath{s}{t} \cup \gamma_{z\to t}$ is given by $$|\mpath{s}{t}| + |\gamma_{z\to t}| \leq 3\cdot\Lone{s}{t} + 3\cdot \Lone{z}{t} \leq  3\cdot\Lone{s}{t}+6\cdot\Lone{s}{t} = 9\cdot\Lone{s}{t}.$$

Two cases arise: If $\gamma_{z\to t}$ reaches $t$ ($w=t$), then we are done since $\mpath{s}{t} \cup \gamma_{z\to t}$ joins $s$ with $t$ through $z$. 

If $\gamma_{z\to t}$ does not reach $t$ ($w\neq t$), then $w$ lies below $\ellminus$ by Lemma~\ref{lemma:Endpoint of mpath} applied on $\mpath{z}{t}^*$. Moreover, as $s$ is $t$-protected, no point in $C_1(s)$ can be reached by $\gamma_{z\to t}$ and hence $w$ must lie inside $R$.
We claim that $\Lone{s}{t}\leq 2\cdot \Lone{s}{w}$. If this claim is true, $|\mpath{s}{t} \cup \gamma_{z\to t}| \leq 9\cdot\Lone{s}{t} \leq 18\cdot \Lone{s}{w}$.
Furthermore, by the induction hypothesis, there is a path $\varphi$ joining $w$ with $t$ of length at most $18\cdot\Lone{w}{t}$. 
Consequently, by joining $\mpath{s}{t}, \gamma_{z\to t}$ and $\varphi$, we obtain an $s$-$t$-path of length at most 
$18\cdot\Lone{s}{w} + 18\cdot\Lone{w}{t}  =  18\cdot\Lone{s}{t}$.

\begin{figure}[h!]
\centering
\includegraphics{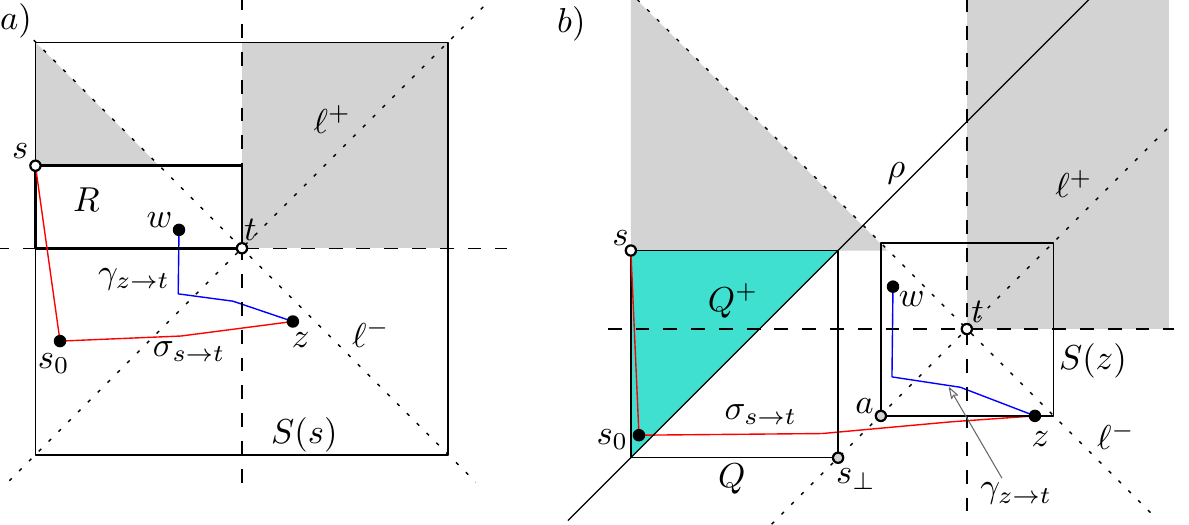}
\caption{\small $a)$ Case 2.2 in the proof of Lemma~\ref{lemma:One empty quadrant}, path $\mpath{s}{t}$ has no point inside $R$ and reaches a point $z$ lying in the bottom-right quadrant of $\sq{t}{s}$.
$b)$ The inductive argument proving that the point $w$, reached after taking the path $\gamma_{z\to t}$, lies outside of the triangle $Q^+$ containing all the points above $\rho$ and below $s$. As $s$ is $t$-protected, the region above $s$ and below $\rho$ is empty.}
\label{fig:Induction step}
\end{figure}

To prove that $\Lone{s}{t}\leq 2\cdot \Lone{s}{w}$, let $s_\perp$ be the orthogonal projection of $s$ on $\ellplus$.
Let $\rho$ be the perpendicular bisector of the segment $ss_\perp$ and notice that for every point $y$ in $C_0(s)$, $\Lone{s}{t} \leq 2\cdot \Lone{s}{y}$ if and only if $y$ lies below $\rho$.

Let $Q$ be the minimum axis-aligned square containing $s$ and $s_\perp$. Note that $\rho$ splits $Q$ into two equal triangles $Q^+$ and $Q^-$ as one diagonal of $Q$ is contained in $\rho$. Assume that $Q^+$ is the triangle that lies above $\rho$.
Notice that all points lying in $C_0(s)$ and above $\rho$ are contained in $Q^+$; see Fig.~\ref{fig:Induction step}(b). We prove that $w$ lies outside of $Q^+$ and hence, that $w$ must lie below $\rho$.

If $\n{0}{s}$ lies below $\rho$, then the empty triangle $\Delta_0(s)$ contains $Q^+$ forcing $w$ to lie below $\rho$.
Assume that $\n{0}{s}$ lies above $\rho$. In this case, $z$ lies above $\n{0}{s}$ as we only followed $1$-edges to reach $z$ in the construction of $\mpath{s}{t}$ by Step~\ref{step:If statement} of Algorithm~\ref{alg:MonotonePath}.
Let $a$ be the intersection of $\ellplus$ and the ray shooting to the left from $z$. Notice that $w$ must lie to the right of $a$ as the path $\gamma_{z\to t}$ is contained in the square $\sq{t}{z}$ and $a$ is one of its corners. 
As $z$ lies above $\n{0}{s}$ and $\n{0}{s}$ lies above $s_\perp$, we conclude that $a$ is above $s_\perp$ and both lie on $\ellplus$. Therefore, $a$ lies to the right of $s_\perp$, implying that $w$ lies to the right of $s_\perp$ and hence outside of $Q^+$.
As we proved that $w$ lies below $\rho$, we conclude that $\Lone{s}{t}\leq 2\cdot \Lone{s}{w}$.\qed

\section{Lower Bound}\label{sec:LowerBound}
In this section we show how to construct a lower bound of 7 for the $\theta_4$-graph. We start with two vertices $u$ and $w$
such that $w$ lies in $C_2(u)$ and the difference of their $x$-coordinates is arbitrarily small.
To construct the lower bound, we repeatedly replace a single edge of the shortest $u$-$w$-path by placing vertices in the corners of the empty triangle(s) associated with that edge. The final graph is shown in Fig.~\ref{fig:LowerBound}. 

We start out by removing the edge between $u$ and $w$ by placing two vertices, one inside $\Delta_2(u)$ and one inside $\Delta_0(w)$, both arbitrarily close to the corner that does not contain $u$ nor $w$. 
Let $v_1$ be the vertex placed in $\Delta_2(u)$. 
Placing $v_1$ and the other vertex in $\Delta_0(w)$ removed edge $u w$, but created two new shortest paths, $u v_1 w$ being one of them. Hence, our next step is to extend this path. 

We remove edge $v_1 w$ (and its equivalent in the other path) by placing a vertex arbitrarily close to the corner of 
$\Delta_1(v_1)$ and $\Delta_3(w)$ that is farthest from $u$. Let $v_2$ be the vertex placed inside $\Delta_1(v_1)$. Hence, edge $v_1 w$ is replaced by the path $v_1 v_2 w$. 

Next, we extend the path again by removing edge $v_2 w$ (and its equivalent edge in the other paths). Like before, we place a vertex arbitrarily close to the corner of 
$\Delta_0(v_2)$ and $\Delta_2(w)$ that is farthest from $u$. Let $v_3$ be the vertex placed in $\Delta_0(v_2)$. Hence, edge $v_2 w$ is replaced by $v_2 v_3 w$. 

Finally, we replace edge $v_3 w$ (and its equivalent edge in the other paths). For all paths for which this edge lies on the outer face, we place a vertex in the corner of the two empty triangles defining that edge. However, for edge $v_3 w$ which does not lie on the outer face, we place a single vertex $v_4$ in the intersection of 
$\Delta_3(v_3)$ and $\Delta_1(w)$. In this way, edge $v_3 w$ is replaced by $v_3 v_4 w$. When placing $v_4$, we need to ensure that no edge $u v_4$ is added as this would created a shortcut. This is easily achieved by placing $v_4$ such that it is closer to $v_3$ than to $w$. The resulting graph is shown in Fig.~\ref{fig:LowerBound}. 

\begin{figure}[ht]
\centering
  \includegraphics{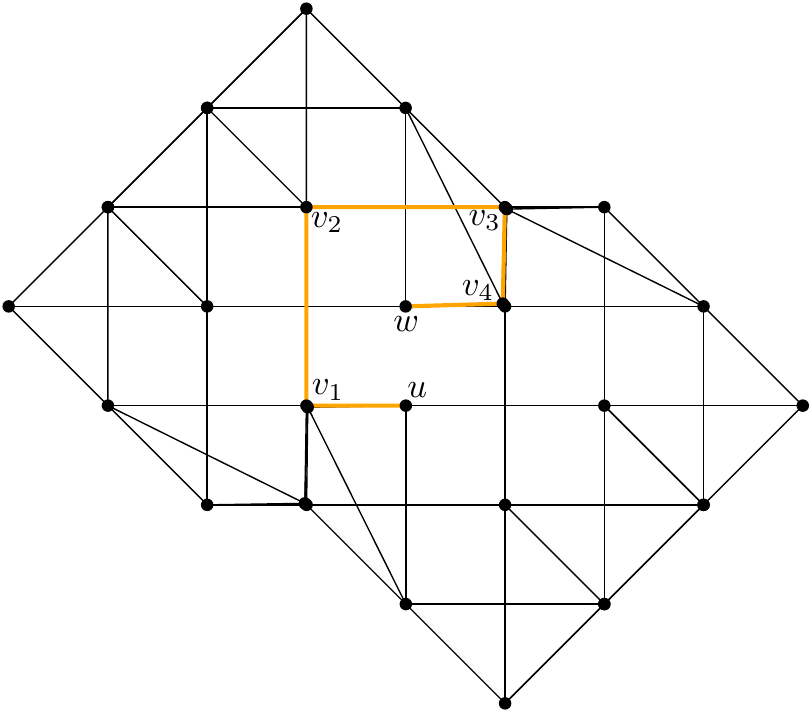}
  \caption{A lower bound for the $\theta_4$-graph. One of the shortest paths from $u$ to $w$ goes via $v_1$, $v_2$, $v_3$, and $v_4$.} 
  \label{fig:LowerBound}
\end{figure}

\begin{lemma}
  The stretch factor of the $\theta_4$-graph is at least $7$. 
\end{lemma}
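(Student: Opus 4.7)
The plan is to verify by direct computation that the shortest $u$-$w$-path in the graph of Figure~\ref{fig:LowerBound} has length tending to $7\cdot |uw|$ as the various perturbation parameters go to zero. I would place $u$ at the origin and $w$ just off the positive $y$-axis of $u$ at vertical distance $1$, so that $w\in C_2(u)$ and $|uw|\to 1$.

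First I would track the limiting positions of the construction vertices along the distinguished path $u,v_1,v_2,v_3,v_4,w$ by computing the corners of the nested empty triangles: $v_1$ approaches $(-1,0)$ (the non-apex corner of $\Delta_2(u)$ not adjacent to $w$), $v_2$ approaches $(-1,2)$ (the corner of $\Delta_1(v_1)$ farthest from $u$), and $v_3$ approaches $(1,2)$ (the corner of $\Delta_0(v_2)$ farthest from $u$). The region $\Delta_3(v_3)\cap\Delta_1(w)$ degenerates in the limit to the unit square with $v_3$ and $w$ at opposite corners; its diagonal is exactly the perpendicular bisector of $v_3w$, so the requirement that $v_4$ be strictly closer to $v_3$ than to $w$ pushes $v_4$ toward one of the two off-diagonal corners of this square, forcing both $|v_3v_4|$ and $|v_4w|$ to tend to $1$.

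Next I would sum the five segment lengths to obtain
\[
|uv_1|+|v_1v_2|+|v_2v_3|+|v_3v_4|+|v_4w|\;\longrightarrow\;1+2+2+1+1\;=\;7,
\]
and verify that each of these edges really belongs to the $\theta_4$-graph by checking, step by step, that at the moment the construction inserts a new vertex it becomes the projection-wise nearest point in the appropriate cone of both endpoints of the edge it is meant to replace.

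The main obstacle is ruling out shorter $u$-$w$-paths. Since $u$ has only the two neighbors $v_1$ and $v_1'$ in the final graph, every $u$-$w$-path must leave $u$ along one of these two edges, and by the symmetric design of the construction it suffices to analyze one branch together with the possible crossings between the two. I would catalogue the $\theta_4$-edges incident to each non-endpoint vertex, including the companions $v_1',v_2',v_3'$ inserted by the construction to break the symmetric paths, and verify that the perturbation parameters can be chosen so that the potential short-circuit edges near $v_4$---the one flagged in the construction, $uv_4$, together with $v_1'v_4$ and analogous candidates---do not appear in the final graph. Since the same length bound then holds on the symmetric branch, every $u$-$w$-path has length at least $7\cdot|uw|$ in the limit, yielding the claimed lower bound of $7$ on the stretch factor.
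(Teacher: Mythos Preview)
Your proposal is correct and follows the same approach as the paper: compute the length of the distinguished path $u\,v_1\,v_2\,v_3\,v_4\,w$ in the limit to obtain $1+2+2+1+1=7$, which matches the paper's assertion that $uv_1$, $v_3v_4$, $v_4w$ each have length $|uw|-\varepsilon$ while $v_1v_2$ and $v_2v_3$ each have length $2|uw|-\varepsilon$. Your planned verification that these are genuine $\theta_4$-edges and that no shorter $u$--$w$ path exists merely makes explicit what the paper leaves to the preceding construction; the only imprecise step is the claim that the condition ``$v_4$ closer to $v_3$ than to $w$'' \emph{forces} $v_4$ to an off-diagonal corner---that condition only confines $v_4$ to half of the square, and the near-corner placement is a deliberate choice in the construction---but this does not affect the conclusion.
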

\begin{proof}
  We look at path $u v_1 v_2 v_3 v_4 w$ from Fig.~\ref{fig:LowerBound}. Edges $u v_1$, $v_3 v_4$, and $v_4 w$ have length $|u w| - \varepsilon$ and edges $v_1 v_2$ and $v_2 v_3$ have length $2 \cdot |u w| - \varepsilon$, where $\varepsilon$ is positive and arbitrarily close to 0. Hence the stretch factor of this path is arbitrarily close to 7. 
\end{proof}

{\small
\bibliographystyle{abbrv}
\bibliography{Theta4}}

\end{document}